\newcommand{\tsp}
{\mathsf{\scriptstyle{T}}}
\newcommand{\R}{\mathbb{R}}
\newcommand{\E}{\mathbb{E}}
\theoremstyle{definition}
\newtheorem{lma}{Lemma}
\theoremstyle{plain}
\newtheorem{thm}{Theorem}
\newcommand{\norm}[1]{\left\lVert#1\right\rVert}
\def\BibTeX{{\rm B\kern-.05em{\sc i\kern-.025em b}\kern-.08em
    T\kern-.1667em\lower.7ex\hbox{E}\kern-.125emX}}
\begin{document}

\title{On the Capacity of Zero-Drift First Arrival Position Channels in Diffusive Molecular Communication
}

\author{\IEEEauthorblockN{Yen-Chi Lee}
\IEEEauthorblockA{\textit{Hon Hai (Foxconn) Research Institute} \\
Taipei, Taiwan \\
yen-chi.lee@foxconn.com}
\and
\IEEEauthorblockN{Min-Hsiu Hsieh}
\IEEEauthorblockA{\textit{Hon Hai (Foxconn) Research Institute} \\
Taipei, Taiwan \\
min-hsiu.hsieh@foxconn.com}
}

\maketitle

\begin{abstract}

\color{black}
Recent advancements in understanding the impulse response of the first arrival position (FAP) channel in molecular communication (MC) have illuminated its Shannon capacity. While Lee et al. shed light on FAP channel capacities with vertical drifts, the zero-drift scenario remains a conundrum, primarily due to the challenges associated with the heavy-tailed Cauchy distributions whose first and second moments do not exist, rendering traditional mutual information constraints ineffective. This paper unveils a novel characterization of the zero-drift FAP channel capacity for both 2D and 3D. Interestingly, our results reveal a 3D FAP channel capacity that is double its 2D counterpart, hinting at a capacity increase with spatial dimension growth. Furthermore, our approach, which incorporates a modified logarithmic constraint and an output signal constraint, offers a simplified and more intuitive formula (similar to the well-known Gaussian case) for estimating FAP channel capacity.
\color{black}

\end{abstract}

\begin{IEEEkeywords}
diffusive molecular communication, absorbing receivers, first arrival position (FAP), channel capacity, Cauchy distribution.
\end{IEEEkeywords}

\section{Introduction}
\color{black}
Molecular communication (MC) represents a novel communication paradigm rooted in the transmission of information through molecular exchange \cite{nakano2013molecular,yeh2012new}. Owing to its nano-scale applicability and biocompatibility, MC emerges as a potent communication technique for nano-networks \cite{akyildiz2008nanonetworks,farsad2016comprehensive}. Within these systems, tiny message molecules (MM) act as the principal information carriers. A propagation mechanism is necessary for transporting MMs to the receiver (Rx), and this mechanism can be either diffusive (a.k.a. diffusion-based) \cite{pierobon2012capacity}, flow-based \cite{kadloor2012molecular}, or due to an engineered transport system like molecular motors \cite{moore2006design,gregori2010new}. Among these different propagation mechanisms, diffusive MC (sometimes in combination with a drift field) has been the most prevalent approach for MC theoretical research \cite{jamali2019channel}.
\color{black}

\color{black}
The molecular communication (MC) receiver's reception mechanism can be bifurcated into two primary categories, as delineated by \cite{jamali2019channel}: i) passive reception and ii) active reception. Within the realm of active reception, this study considers the widely recognized fully-absorbing Rx \cite{yilmaz2014three}. It is postulated that the Rx possesses the capability to precisely ascertain both the time \cite{srinivas2012molecular} and the position \cite{lee2016distribution,pandey2018molecular,akdeniz2018molecular} at the initial contact of the MM with the Rx. To investigate the Shannon capacity \cite{shannon1948mathematical} for this novel category of position channels, we adopt a straightforward geometric configuration for the receiver, specifically, a spacious receiving plane \cite{andrews2009accurate}. It is worth noting that this spacious plane model can serve as an approximation, particularly when the transmission distance is significantly shorter than the dimensions of the receiver, as discussed in prior works \cite{lee2016distribution,lee2023char}. Note that MMs are removed upon their interaction with the absorptive receiving plane \cite{yilmaz2014three,pandey2018molecular}. For a visual representation of this system model, please refer to Fig.~1.
\color{black}

There are various physical attributes of MMs that
can convey information.
The concept of the First Arrival Position (FAP) as an information-carrying property in MC literature was first introduced in the paper by Lee et al. in 2016 \cite{lee2016distribution}. Subsequently, this idea was further developed, particularly in the context of two-dimensional (2D) scenarios, in a later study conducted by Pandey et al. in 2018 \cite{pandey2018molecular}.
\color{black}
While the majority of studies within the realm of MC primarily concentrate on the First Arrival Time (FAT)\footnote{Readers interested in gaining a more comprehensive understanding of FAT-type modulation and its associated channel characteristics are encouraged to refer to \cite{srinivas2012molecular,li2014capacity} for detailed information.}
for absorbing receivers, it is noteworthy to acknowledge at least two reasons for giving due consideration to FAP as a viable alternative.
\begin{itemize}
    \item[1)] For every independent channel use—specifically, for each transmission of a single MM—the FAT information is one-dimensional (1D). In contrast, the FAP-type modulation can carry information in higher dimensions, up to \(n-1\), in an \(n\)-dimensional space. Consequently, the capacity of FAP channels, for each individual channel use, can increase significantly in higher-dimensional scenarios. This potentially allows FAP channels to surpass FAT channels in terms of capacity. We will establish the capacity formulas for FAP channels in Section~\ref{sec:MR} and validate this assertion for 2D and 3D cases.
    \item[2)] The second consideration pertains to time efficiency. In scenarios involving multiple-MM transmission, MMs may arrive out of sequence due to the inherent unpredictability of the diffusion process, leading to cross-over effects as documented in \cite{eckford2008molecular,jamali2019channel}. To mitigate cross-over effects in the MC system, the transmission duration for each symbol must not be overly short.\footnote{A simple way to conceptualize this is to imagine the necessity for a \emph{guard interval} between successive timing symbols.} Given this, for applications where time efficiency is paramount, exploring FAP-type modulation as an alternative solution might be advisable.
\end{itemize}

\color{black}
In this paper, our primary attention centers on one-shot transmission \cite{murin2018optimal}, referring specifically to transmission using a singular MM. To delve into the channel capacity of either FAP or FAT channels, initiating with a quantitative delineation of the channel impulse response is imperative. The one-shot FAT channel can be succinctly characterized as a time-invariant additive channel \cite{srinivas2012molecular}:
$
    t_{\text{out}} = t_{\text{in}} + t_{\text{n}},
$
where $t_{\text{in}}$ represents the release time, $t_{\text{out}}$ denotes the arrival time, and $t_{\text{n}}$ is the random time delay resulting from propagation dynamics. As established in \cite{srinivas2012molecular}, the variable $t_{\text{n}}$ adheres to the inverse Gaussian distribution. This channel is thus recognized as the Additive Inverse Gaussian Noise (AIGN) channel within the MC field. Subsequent research in \cite{srinivas2012molecular,li2014capacity} has presented bounds on the capacity of the AIGN channel and has further characterized the capacity-achieving input time distribution.
\color{black}

\color{black}
For FAP channels operating within \(n\)-dimensional spaces, the one-shot channel model is elegantly defined through an additive vector form, as detailed in \cite{lee2022arrival}:
$
    \mathbf{x}_{\text{out}} = \mathbf{x}_{\text{in}} + \mathbf{x}_{\text{n}},
$
where \(\mathbf{x}_{\text{in}}\) signifies the releasing position, \(\mathbf{x}_{\text{out}}\) indicates the arriving position, and \(\mathbf{x}_{\text{n}}\) captures the random position deviation induced by propagation mechanisms. 
It is crucial to recognize that the vectors \(\mathbf{x}_{\text{in}}\), \(\mathbf{x}_{\text{out}}\), and \(\mathbf{x}_{\text{n}}\) reside within the Euclidean vector space \(\mathbb{R}^{n-1}\). This configuration arises since one dimension is earmarked for physical movement, effectively excluding it from bearing information in that particular dimension.
Though past studies have articulated the probability density function of \(\mathbf{x}_{\text{n}}\) \cite{lee2016distribution,pandey2018molecular,lee2022arrival}, the quest to comprehend the channel capacity of this distinctive position channel is ongoing. A recent exploration by Lee et al. \cite{lee2023char} has elucidated the FAP channel capacity in contexts characterized by a vertical drift toward Rx. However, the channel capacity remains elusive in zero-drift scenarios. This manuscript aims to address this lacuna, unveiling precise formulas for the zero-drift FAP channel capacity for 2D and 3D dimensions, and introducing a novel logarithmic constraint, as detailed in Sections~III and IV.
\color{black}

The remainder of this paper is structured as follows. 
Section~\ref{pre-A} provides a succinct review of mutual information and the Shannon channel capacity framework. Section~\ref{sec:CM} outlines the channel model under scrutiny and introduces the \(\alpha\)-power constraint.
Our core findings are presented in Section~\ref{sec:MR}. We draw our conclusions in Section~\ref{sec:conclude}.
\color{black}


\section{Preliminary} \label{pre-A}
\color{black}
This section revisits foundational concepts surrounding mutual information and channel capacity. Consider \(X\) as a random variable with corresponding probability density function (pdf) \(f\). The support of \(f\) is represented as \(\mathcal{X}\). The differential entropy \(h(X)\) can be expressed as
$
h(X) := \E_X[-\ln f(X)] = -\int_{\mathcal{X}}  [\ln f(x)] f(x)dx.
$
Leveraging the above entropy definition, the mutual information between the input and output of the additive channel \(Y = X + N\) is:
\begin{align}
\begin{split}
    I(X;Y) &= h(Y) - h(Y|X)
    = h(Y) - h(X+N|X) \\
    &= h(Y) - h(X|X) - h(N|X) \\
    &= h(Y) - h(N).
\end{split}
\end{align}
Here, the assumption is that noise \(N\) operates independently of signal \(X\).
Remember that the (Shannon) channel capacity is captured as
$
C = \sup_{f_X(x)} I(X;Y).
\label{eq:inpc}
$
The supremum is evaluated over all possible \(X\) distributions under a selected constraint. For point-to-point communication or single-input-single-output (SISO) scenarios, output constraints can be adopted interchangeably, resulting in
$
C = \sup_{f_Y(x)} I(X;Y).
\label{eq:outpc}
$

Constraint selection is intimately tied to the noise distribution. To enhance understanding, we'll first contemplate a straightforward instance prior to diving into the specifics of FAP channels. For the additive Gaussian channel, where noise follows a Gaussian distribution, the prevailing power measure embraces the second moment constraint:
$
    \E[X^2] = \int_{\mathcal{X}} x^2 f(x)dx \leq P.
$
The constant \(P \geq 0\) denotes the upper power limit for all input signals. Based on this constraint, the channel capacity can be expressed as:
$
C = \sup_{f_X(x) :\  \E[X^2]\leq P} I(X;Y).
$
Consulting a standard reference such as \cite{cover1999elements}, the supremum of \(I(X;Y)\) is unveiled as
$
\frac{1}{2}\ln\left(\frac{\sigma^2+P}{\sigma^2}\right),
$
with the capacity-achieving distribution being Gaussian with variance \(P\), i.e. \(X \sim \mathcal{N}(0,P)\).

For subsequent comparison (presented in Section~\ref{sec:MR}) purpose, we can set \(A^2:=\sigma^2+P\), yielding
\begin{align}
\begin{split}
C_{\text{G}}
=\frac{1}{2}\ln\left(\dfrac{A^2}{\sigma^2}\right)
=\ln\left(\dfrac{A}{\sigma}\right),
\end{split}
\end{align}
where the subscript \(\text{G}\) hints the Gaussian channel.
\color{black}

\section{Channel Model} \label{sec:CM}
\begin{figure}[!t]  
\centerline{\includegraphics[width=0.48\textwidth]{./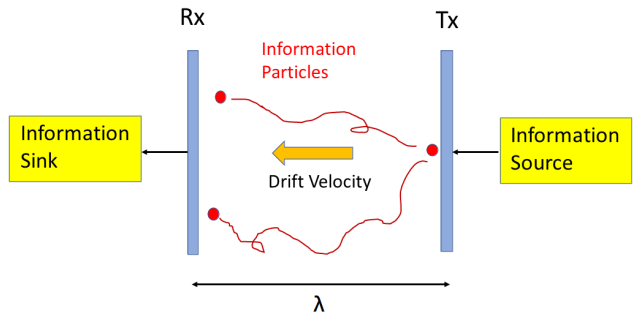}}
\caption{
This figure presents an abstract FAP channel model featuring hyperplane-shaped transmitters (Tx) and receivers (Rx). In this model, the transmitter is assumed to be transparent, permitting particles to traverse it without encountering any forces. It is important to observe that the emission point resides on the Tx hyperplane, and the Rx registers the arrival point when the information particle initially hits the Rx. Note that we consider \emph{zero drift} case for our main results.
}
\label{fig:2Da}
\end{figure}

We consider an additive vector channel:
$
    \mathbf{Y}=\mathbf{X}+\mathbf{N}
$
where the noise $\mathbf{N}$ is (multivariate) Cauchy distributed.\footnote{
We postpone the reason why FAP channels reduce to Cauchy channels under zero-drift assumption to Section~\ref{subsec:MR-A}.
}
Because the Cauchy distributions\footnote{
For a detailed characterization of Cauchy density functions, one can refer to Appendix~A.
} are heavy-tailed \cite{sigman1999primer}, traditional signal processing theory, which is tailored for finite second moment signals, does not apply directly to the Cauchy case.

Of the heavy-tailed probability models, the alpha-stable family \cite{pierce1997application} stands out, having demonstrated its prowess in simulating multiple access interference \cite{el2010alpha} and co-channel interference from a Poisson-distributed field of interferers \cite{gulati2010statistics}. In alpha-stable channels, the second moment is not a viable power measure due to its inherent infinity.

\color{black}
The task of discerning the channel capacity for alpha-stable additive noise—specifically when \(\alpha \geq 1\)—under an \(r\)-th absolute moment input constraint was adeptly addressed by \cite{fahs2012capacity} for symmetric alpha-stable noise scenarios. Their exploration led to the identification of an optimal capacity-achieving input that is both discrete and compactly defined. Meanwhile, a distinct concept of power, termed as the ``geometric power," was ushered in by \cite{gonzalez1997zero} to cater to heavy-tailed distributions \cite{sigman1999primer}. Within the context of this manuscript, we introduce an alternative power characterization schema for the \(\alpha\)-stable family, which is referred to as the \(\alpha\)-power, as detailed in \cite{fahs2017information}. It's pertinent to highlight that while the genesis of this framework traces back to \cite{fahs2017information}, our work repurposes its essence to adeptly navigate the challenges posed by zero-drift FAP channels in the realm of molecular communication.
\color{black}

Let us define a relative power measure $P_{\alpha}(\mathbf{X})$ satisfying the following:
\begin{itemize}
    \item[P1)] $P_\alpha(\mathbf{X}) \geq 0$, with equality if and only if $\mathbf{X}=0$ almost surely.
    \item[P2)] $P_\alpha(k\mathbf{X})=|k|P_\alpha(\mathbf{X})$, for any $k\in\R$.
\end{itemize}
For $\alpha$-stable distributions, it is known (see \cite{fahs2017information}) that
\begin{equation}
\begin{cases}
\E_\mathbf{X}\big[\ln(1+\norm{\mathbf{X}}^2)\big]<\infty,\quad &\alpha<2;\\
\E_\mathbf{X}\big[\norm{\mathbf{X}}^2\big] < \infty,\quad &\alpha=2.
\end{cases}
\end{equation}
We choose the power measure $P_1(\cdot)$ for multivariate Cauchy-like distributions as follows:
\begin{equation}
    \E_\mathbf{X}\left[\ln\left(1+\norm{\frac{\mathbf{X}}{P_1(\mathbf{X})}}^2 \right)\right]
    = w_2\left(\frac{1+p}{2};\frac{p}{2}\right),
\end{equation}
where the constant evaluation function $w_2(t;\alpha)$ can be written explicitly as
the difference of two digamma functions:
$
    w_2(t;a) = \psi(t) - \psi(t-a), \text{\ for\ } t>a.
$
Note that the values of $w_2$ function at $p=1$ and $p=2$ are $2\ln(2)$ and $2\ln(e)$ respectively. 

\color{black}

\color{black}
\section{Main Results}\label{sec:MR}
In this section, we delve into the intricacies of the FAP communication channel. We commence by demonstrating in Section~\ref{subsec:MR-A} that as the drift velocity in the fluid medium approaches zero, the 2D and 3D FAP channels simplify to univariate and bivariate Cauchy distributions, respectively.

Proceeding to Sections~\ref{MR-B} and~\ref{MR-C}, we integrate the concept of the \(\alpha\)-power, a relative power measure documented in \cite{fahs2017information}, into MC discussions. We explicitly define the signal space requisite to depict the no-drift FAP channels. Employing our introduced logarithmic constraint on the output signal space, we ascertain the Shannon capacity of the FAP channel for both 2D and 3D configurations in a closed-form manner. Key insights regarding the FAP channel capacity are encapsulated in Theorem~\ref{thm:T1} and Theorem~\ref{thm:T2}. 

Notice that in the literature, while \cite{fahs2014cauchy} provided a capacity formula for the univariate Cauchy channel (not for MC purpose) under a logarithmic constraint applied to the input signal space, directly extending their findings to higher-dimensional scenarios is challenging due to intricate integration procedures, as evident in \cite[Appendix~I]{fahs2014cauchy}. One way to tackle this is replacing the input constraints with output constraints, as demonstrated in Sections~\ref{MR-B} and~\ref{MR-C}.
\color{black}

\subsection{FAP Channel Reduces to Cauchy Channel Under Zero Drift Condition}\label{subsec:MR-A}
\subsubsection{In 2D Space}
From \cite{lee2022arrival} we know that
the FAP density function in 2D space, with a drift velocity $\mathbf{v}=(v_1,v_2)$, can be written as:
\begin{align}
\begin{split}
f_{Y \mid X}(\mathbf{y}|\mathbf{x})=&\ \dfrac{|\mathbf{v}|\lambda}{\sigma^2\pi}\exp\left\{\dfrac{-v_2 \lambda}{\sigma^2}\right\} \exp\left\{\dfrac{-v_1(x_1-y_1)}{\sigma^2}\right\}\\ &\cdot\dfrac{K_1\left(\dfrac{|\mathbf{v}|}{\sigma^2}\sqrt{(x_1-y_1)^2+\lambda^2}\right)}{\sqrt{(x_1-y_1)^2+\lambda^2}},
\end{split}
\label{eq:2DFAP}
\end{align}
where $\mathbf{x}=(x_1,\lambda)$, $\mathbf{y}=(y_1,0)$, and $\lambda$ is the \textit{transmission distance} between transmitter (Tx) and Rx. Note that in Eq.~\eqref{eq:2DFAP}, $\sigma^2$ is the microscopic diffusion coefficient which can be related to the macroscopic diffusion coefficient $D$ through the relation $\sigma^2=2D$, see \cite{farsad2016comprehensive}.
Note also that the special function $K_{1}(\cdot)$ in Eq.~\eqref{eq:2DFAP} is the modified Bessel function of the second kind (see \cite{bell2004special}) with order $\nu=1$.

Next we show that when drift velocity approaches zero, the 2D FAP density reduces to a univariate Cauchy distribution. 
We use a limit property about the special function $K_1(\cdot)$ from \cite{yang2017approximating}:
$
    \lim_{x\xrightarrow{}0} xK_1(x)=1.
$
Using this limit property, we can do the following calculation:
\begin{align}
\begin{split}
&\ f_{Y \mid X; \mathbf{v}=\mathbf{0}}(\mathbf{y}|\mathbf{x})\\
=&\ 
\lim_{|\mathbf{v}|\xrightarrow{}\mathbf{0}}
\frac{|\mathbf{v}|\lambda}{\sigma^2\pi}
\cdot\frac{K_1\left(\frac{|\mathbf{v}|}{\sigma^2}\sqrt{(x_1-y_1)^2+\lambda^2}\right)}{\sqrt{(x_1-y_1)^2+\lambda^2}}\\
=&\ 
\dfrac{\lambda}{\pi}
\lim_{|\mathbf{v}|\xrightarrow{}\mathbf{0}} \Bigg[
\frac{|\mathbf{v}|}{\sigma^2}\sqrt{(x_1-y_1)^2+\lambda^2}\\ &\quad\quad\quad\quad\quad \cdot\frac{K_1\left(\frac{|\mathbf{v}|}{\sigma^2}\sqrt{(x_1-y_1)^2+\lambda^2}\right)}{(x_1-y_1)^2+\lambda^2}
\Bigg]
\\
=&\ \dfrac{\lambda}{\pi}
\frac{1}{(x_1-y_1)^2+\lambda^2}.
\end{split}
\label{eq:2D-zero}
\end{align}
This calculation result can be interpreted as follows. Let $x_1$ be the input position and $y_1$ be the output position. The displacement of position is an additive noise $n$, following the density function:
$
\frac{\lambda}{\pi}
\frac{1}{n^2+\lambda^2},
$
which is Cauchy distributed. Please refer to Appendix~A for the form of density function of Cauchy random variables.

\subsubsection{In 3D Space}
From \cite{lee2022arrival,lee2016distribution} we know that
the FAP density function in 3D diffusion channel, with a drift velocity $\mathbf{v}=(v_1,v_2,v_3)$, can be written as:
\begin{equation}
\begin{aligned}
&f_{Y \mid X}(\mathbf{y}|\mathbf{x})\\
=\ & \frac{\lambda}{2 \pi} \exp \left\{-\dfrac{v_{3} \lambda}{\sigma^2}\right\} \exp \left\{\dfrac{v_{1}}{\sigma^2}\left(y_1-x_1\right)+\dfrac{v_{2}}{\sigma^2}\left(y_2-x_2\right)\right\} \\
& \cdot \exp \left\{-\dfrac{|\mathbf{v}|}{\sigma^2}\norm{\mathbf{y}-\mathbf{x}}\right\}\left[\frac{1+\frac{|\mathbf{v}|}{\sigma^2}\norm{\mathbf{y}-\mathbf{x}}}{\norm{\mathbf{y}-\mathbf{x}}^{3}}\right],
\end{aligned}
\label{eq:3DFAP}
\end{equation}
where $\sigma^2$ is the microscopic diffusion coefficient, the scale parameter $\lambda$ is the \textit{transmission distance} between Tx and Rx plane, $v_3$ is the longitudinal component of drift velocity in the direction parallel to the transmission direction, and $v_1$, $v_2$ are the components perpendicular (or transverse) to the transmission direction. Here we use the symbol $\norm{\cdot}$ to represent the Euclidean norm. Namely,
$
\norm{\mathbf{y}-\mathbf{x}}=\sqrt{(y_1-x_1)^2+(y_2-x_2)^2+\lambda^2},
$
where $\mathbf{x}=(x_1,x_2,\lambda)$, $\mathbf{y}=(y_1,y_2,0)$ are position vectors in $\R^3$.

Next we show that when drift velocity approaches zero, the 3D FAP density reduces to a bivariate Cauchy distribution.
Note that when $\mathbf{v}=\mathbf{0}$, all the exponential terms in Eq.~\eqref{eq:3DFAP} become $e^0=1$,
so that we have:
\begin{align}
\begin{split}
&\ f_{Y \mid X;\mathbf{v}=\mathbf{0}}(\mathbf{y}|\mathbf{x})\\
=&\ 
\frac{\lambda}{2 \pi}
\lim_{|\mathbf{v}|\xrightarrow{}\mathbf{0}}
\frac{1+\frac{|\mathbf{v}|}{\sigma^2}\norm{\mathbf{y}-\mathbf{x}}}{\norm{\mathbf{y}-\mathbf{x}}^{3}}
=\
\frac{\lambda}{2 \pi}
\frac{1}{\norm{\mathbf{y}-\mathbf{x}}^{3}}\\
=&\
\dfrac{\lambda}{2\pi}
\dfrac{1}{
\big((y_1-x_1)^2+(y_2-x_2)^2+\lambda^2\big)^{3/2}
}.
\end{split}
\end{align}
That is, if we regard the position channel as:
$
\mathbf{y}=\mathbf{x}+\mathbf{n},
$
then $\mathbf{n}$ follows a bivariate Cauchy distribution, see Eq.~\eqref{eq:simple-bi-ch} in Appendix~A for its pdf.
\color{black}


\subsection{Channel Capacity for 2D FAP Channel Under Logarithmic Constraint}\label{MR-B}
As mentioned in Section~\ref{subsec:MR-A}, a 2D FAP channel reduces to a Cauchy channel when the drift velocity approaches zero. Since we are considering a point-to-point (or SISO) communication scenario, the input constraint is actually equivalent to the output constraint. We choose to adopt output constraint to simplify the calculation.

It is known that a Cauchy distribution $X \sim {\rm Cauchy}(0,k)$ maximizes the entropy among all RVs\footnote{This is called the \textit{principle of maximum entropy} (see \cite{kapur1989maximum}) for Cauchy distribution.} that satisfy 
\begin{equation}
\E_X \ln \left[1+\left(\frac{X}{k}\right)^2\right] = 2\ln(2),
\end{equation}
or equivalently,
\begin{equation}
\E_X \ln \left[1+X^2\right] = 2\ln(1+k).
\end{equation}
The corresponding maximum entropy value is $\ln(4\pi k)$, see Appendix~A. Note that
the parameter $k$ restricts the dispersion of all feasible random variables $X$.


Consider a 2D FAP zero-drift channel (as shown in Section~\ref{subsec:MR-A}), the input-output relation can be written as
\begin{equation}
    Y=X+N,
\label{eq:29}
\end{equation}
where $N\sim {\rm Cauchy}(0,\lambda)$.
For continuous-variable channel capacity problem, we need to specify a family of distributions (i.e. feasible set) that are under consideration in order to prevent the capacity value from being infinite. Instead of writing down explicitly the constraint equality for $X$ or $Y$, we define
\begin{align}
\begin{split}
\mathcal{D}(A)
=\ 
&\Big\{\text{distributions\ }Y\ \Big|\ \exists k\in[\lambda,A]\\ 
&\quad\quad\quad\text{\ such that\ }
\E_Y \ln \Big[1+\Big(\frac{Y}{k}\Big)^2\Big] = 2\ln(2)
\Big\}.
\end{split}
\label{eq:domain-Y}
\end{align}
The parameter $A$ appeared in Eq.~\eqref{eq:domain-Y} indicates the ``largest allowed dispersion'' for the distributions in $\mathcal{D}(A)$. 
For later use, we also define
\begin{align}
\begin{split}
\mathcal{D}_k
=\ 
&\Big\{\text{distributions\ }Y\ \Big|\
\E_Y \ln \Big[1+\Big(\frac{Y}{k}\Big)^2\Big] = 2\ln(2)
\Big\},
\end{split}
\label{eq:domain-k}
\end{align}
so that we have
$
    \mathcal{D}(A)=\bigcup_{k\in[\lambda,A]} \mathcal{D}_k.
$
With these notations, now we can state our main results in the following two theorems.
\begin{thm}
The capacity of 2D FAP channel \eqref{eq:29} is
\begin{equation}
    C_{\rm{2D,FAP}}=C(A,\lambda)=\ln\left(\dfrac{A}{\lambda}\right)
\end{equation}
under the output logarithmic constraint
$
Y\in \mathcal{D}(A)
$
for some 
prescribed dispersion level $A$, where $A \geq \lambda$.
In addition, the corresponding capacity achieving output distribution is
$
Y^* \sim {\rm Cauchy}(0,A),  
$
or equivalently,
$
X^* \sim {\rm Cauchy}(0,A-\lambda).
$
\label{thm:T1}
\end{thm}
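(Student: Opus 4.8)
The plan is to collapse the capacity optimization into a single differential-entropy maximization and then invoke the maximum-entropy characterization of the Cauchy law stated above. Because the channel \eqref{eq:29} is additive with $N$ independent of $X$, the identity $I(X;Y)=h(Y)-h(N)$ from Section~\ref{pre-A} applies, and the noise term contributes a fixed constant: $N\sim\mathrm{Cauchy}(0,\lambda)$ has entropy $h(N)=\ln(4\pi\lambda)$ by the value recorded in Appendix~A. Hence
\begin{align}
\begin{split}
C_{\mathrm{2D,FAP}}
&=\sup_{Y\in\mathcal{D}(A)}\big[h(Y)-h(N)\big]\\
&=\Big(\sup_{Y\in\mathcal{D}(A)}h(Y)\Big)-\ln(4\pi\lambda),
\end{split}
\end{align}
so the entire problem reduces to computing $\sup_{Y\in\mathcal{D}(A)}h(Y)$.

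I would evaluate this supremum by slicing the feasible set through its defining parameter, using $\mathcal{D}(A)=\bigcup_{k\in[\lambda,A]}\mathcal{D}_k$. On a single slice $\mathcal{D}_k$ the constraint is exactly $\E_Y\ln[1+(Y/k)^2]=2\ln(2)$, so the maximum-entropy principle for the Cauchy family gives $\sup_{Y\in\mathcal{D}_k}h(Y)=\ln(4\pi k)$, attained by $\mathrm{Cauchy}(0,k)$. Maximizing the slice value $\ln(4\pi k)$ over $k\in[\lambda,A]$ is immediate, since it is strictly increasing; the optimum therefore sits at the right endpoint $k=A$, yielding $\sup_{Y\in\mathcal{D}(A)}h(Y)=\ln(4\pi A)$ with optimizer $Y^{*}\sim\mathrm{Cauchy}(0,A)$. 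Substituting back gives $C_{\mathrm{2D,FAP}}=\ln(4\pi A)-\ln(4\pi\lambda)=\ln(A/\lambda)$, mirroring the Gaussian expression $C_{\mathrm{G}}=\ln(A/\sigma)$.

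The converse and achievability then separate cleanly, which is precisely the advantage of imposing the constraint on the output. For the converse, every admissible output lies in $\mathcal{D}(A)$ by definition, so the displayed bound controls $I(X;Y)$ with no realizability argument whatsoever, because the supremum is taken over the whole set $\mathcal{D}(A)$ regardless of which laws are induced by a genuine input. For achievability I would exhibit a single matching input: taking $X^{*}\sim\mathrm{Cauchy}(0,A-\lambda)$ independent of $N$ and invoking the stability of the Cauchy family under independent addition (scale parameters add), one obtains $Y^{*}=X^{*}+N\sim\mathrm{Cauchy}(0,A)\in\mathcal{D}_A\subseteq\mathcal{D}(A)$, so $I(X^{*};Y^{*})=\ln(A/\lambda)$ is attained. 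This is exactly where the hypothesis $A\ge\lambda$ enters, guaranteeing the nonnegative scale $A-\lambda$ needed for $X^{*}$ to be a valid Cauchy law.

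The main obstacle I anticipate is not the entropy bound itself but the coordination of the two facts behind achievability and the uniqueness of the optimizer: $\mathcal{D}(A)$ is a union of distinct constraint surfaces rather than a single convex moment body, so one must argue both that no law on an interior slice $\mathcal{D}_k$ with $k<A$ can beat the boundary choice (settled by monotonicity of $\ln(4\pi k)$) and that the chosen maximizer is reproducible at the input (settled by Cauchy self-decomposability). The delicate point is simply combining these correctly, together with checking the degenerate boundary case $A=\lambda$, where $X^{*}\equiv 0$ and the capacity collapses to zero.
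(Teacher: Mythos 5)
Your proposal is correct and follows essentially the same route as the paper's Appendix~B proof: reduce to $\sup_{Y\in\mathcal{D}(A)}h(Y)$ via $I(X;Y)=h(Y)-h(N)$, slice $\mathcal{D}(A)=\bigcup_{k\in[\lambda,A]}\mathcal{D}_k$, apply the maximum-entropy characterization of ${\rm Cauchy}(0,k)$ on each slice to get $\ln(4\pi k)$, and take $k=A$ by monotonicity. Your only addition is making the achievability step explicit by invoking the additive stability of the Cauchy family (the paper's Lemma~\ref{lma:L1}) to realize $Y^{*}\sim{\rm Cauchy}(0,A)$ from $X^{*}\sim{\rm Cauchy}(0,A-\lambda)$, which the paper leaves implicit in the theorem statement.
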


\color{black}
Given that the proof of Theorem~\ref{thm:T1} closely parallels that of Theorem~\ref{thm:T2}, we defer the proof details to Appendix~B and proceed directly to the examination of the 3D case.
\color{black}

\subsection{Channel Capacity for 3D FAP Channel Under Logarithmic Constraint}\label{MR-C}
\color{black}

It is known (see \cite{kapur1989maximum},\cite{kotz2004multivariate}) that a bivariate Cauchy distribution $\mathbf{X}\sim{\rm Cauchy}_2 (0,\Sigma=\text{diag}(k^2,k^2))$ maximizes the entropy among all bivariate RVs that satisfy 
\begin{equation}
\E_{\mathbf{X}} \ln \left[1+\norm{\frac{\mathbf{X}}{k}}^2\right] = 2\ln\left(e\right),
\end{equation}
The corresponding maximum entropy value is $\ln(2\pi e^3 k^2)$, see Appendix~A. Note that
the parameter $k$ restricts the dispersion of all feasible random vectors $\mathbf{X}$.

Consider a 3D FAP zero-drift channel (as shown in Section~\ref{subsec:MR-A}), the input-output relation can be written as
\begin{equation}
    \mathbf{Y}=\mathbf{X}+\mathbf{N},
\label{eq:45}
\end{equation}
where
$\mathbf{N}\sim{\rm Cauchy}_2 (\mathbf{0},\Sigma=\text{diag}(\lambda^2,\lambda^2))$.
We define
\begin{align}
\begin{split}
\mathcal{D}_{\text{bi}}(A)
=\ 
&\Bigg\{\text{distributions\ }\mathbf{Y}\ \Bigg|\ \exists k\in[\lambda,A]\\ 
&\quad\quad\quad\text{\ such that\ }
\E_\mathbf{Y} \ln \Big[1+\norm{\frac{\mathbf{Y}}{k}}^2\Big] = 2\ln(e)
\Bigg\}.
\end{split}
\label{eq:domain-Y-bi}
\end{align}
The parameter $A$ appeared in Eq.~\eqref{eq:domain-Y-bi} indicates the ``largest allowed dispersion'' for the distributions in $\mathcal{D}_{\text{bi}}(A)$. 
For later use, we also define
\begin{align}
\begin{split}
\mathcal{D}_{k,\text{bi}}
=\ 
&\Bigg\{\text{distributions\ }\mathbf{Y}\ \Bigg|\
\E_\mathbf{Y} \ln \Big[1+\norm{\frac{\mathbf{Y}}{k}}^2\Big] = 2\ln(e)
\Bigg\},
\end{split}
\label{eq:domain-k-bi}
\end{align}
so that we have
\begin{equation}
    \mathcal{D}_{\text{bi}}(A)=\bigcup_{k\in[\lambda,A]} \mathcal{D}_{k,\text{bi}}.
\end{equation}
The 3D FAP capacity formula is stated in the following theorem.
\begin{thm}
The capacity of 3D FAP channel \eqref{eq:45}  is
\begin{equation}
    C_{\rm{3D,FAP}}=C(A,\lambda)=2\ln\left(\dfrac{A}{\lambda}\right)
\end{equation}
under the output logarithmic constraint
$
\mathbf{Y}\in \mathcal{D}_{{\rm bi}}(A)
$
for some 
prescribed dispersion level $A$, where $A \geq \lambda$. In addition, the corresponding capacity achieving output distribution is
\begin{equation}
\mathbf{Y}^* \sim {\rm Cauchy}_2(0,{\rm diag}(A^2,A^2)).
\end{equation}
\label{thm:T2}
\end{thm}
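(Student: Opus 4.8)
The plan is to exploit the decomposition $I(\mathbf{X};\mathbf{Y}) = h(\mathbf{Y}) - h(\mathbf{N})$ established in Section~\ref{pre-A}, which is valid because $\mathbf{N}$ is independent of $\mathbf{X}$. Since the noise law $\mathbf{N}\sim{\rm Cauchy}_2(\mathbf{0},{\rm diag}(\lambda^2,\lambda^2))$ is fixed, the term $h(\mathbf{N})$ is a constant; indeed, by the maximum-entropy characterization quoted just above the theorem (the bivariate Cauchy of scale $k$ has entropy $\ln(2\pi e^3 k^2)$), we have $h(\mathbf{N}) = \ln(2\pi e^3\lambda^2)$. Maximizing the capacity therefore reduces to maximizing $h(\mathbf{Y})$ over all admissible output laws $\mathbf{Y}\in\mathcal{D}_{{\rm bi}}(A)$, which is exactly why adopting the output constraint (rather than the input constraint of \cite{fahs2014cauchy}) streamlines the computation.

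Next I would carry out the maximization of $h(\mathbf{Y})$ by exploiting the stratification $\mathcal{D}_{{\rm bi}}(A)=\bigcup_{k\in[\lambda,A]}\mathcal{D}_{k,{\rm bi}}$. For each fixed $k$, every $\mathbf{Y}\in\mathcal{D}_{k,{\rm bi}}$ satisfies the single equality $\E_{\mathbf{Y}}\ln[1+\norm{\frac{\mathbf{Y}}{k}}^2]=2\ln(e)$, so the maximum-entropy principle for the bivariate Cauchy family yields the sharp bound $h(\mathbf{Y})\le\ln(2\pi e^3 k^2)$, attained by $\mathbf{Y}\sim{\rm Cauchy}_2(\mathbf{0},{\rm diag}(k^2,k^2))$. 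Because $\ln(2\pi e^3 k^2)=\ln(2\pi e^3)+2\ln k$ is strictly increasing in $k$, optimizing over the union selects the largest scale $k=A$, giving $\sup_{\mathbf{Y}\in\mathcal{D}_{{\rm bi}}(A)} h(\mathbf{Y}) = \ln(2\pi e^3 A^2)$. Subtracting $h(\mathbf{N})$ then produces the claimed value $\ln(A^2/\lambda^2)=2\ln(A/\lambda)$.

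Finally I would verify achievability, which is the step demanding the most care: the upper bound above is taken over all distributions in $\mathcal{D}_{{\rm bi}}(A)$, whereas the capacity is a supremum over laws genuinely of the form $\mathbf{Y}=\mathbf{X}+\mathbf{N}$, so I must confirm that the extremal $\mathbf{Y}^*\sim{\rm Cauchy}_2(\mathbf{0},{\rm diag}(A^2,A^2))$ is realizable by a bona fide input. This follows from the stability of the Cauchy family under convolution: taking $\mathbf{X}^*\sim{\rm Cauchy}_2(\mathbf{0},{\rm diag}((A-\lambda)^2,(A-\lambda)^2))$ and using additivity of the scale parameters for independent isotropic Cauchy vectors gives $\mathbf{X}^*+\mathbf{N}\sim{\rm Cauchy}_2(\mathbf{0},{\rm diag}(A^2,A^2))=\mathbf{Y}^*$, so the bound is attained exactly and the supremum is a maximum. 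The principal obstacle I anticipate is not the optimization but rigorously justifying this convolution/stability step for the isotropic bivariate Cauchy law, together with confirming that $h(\mathbf{N})$ is finite despite the nonexistence of moments; both reduce to the explicit density and its differential entropy recorded in Appendix~A, which I would invoke to close the argument.
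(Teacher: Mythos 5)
Your proposal is correct and follows essentially the same route as the paper's own proof: the decomposition $I(\mathbf{X};\mathbf{Y})=h(\mathbf{Y})-h(\mathbf{N})$, the stratification $\mathcal{D}_{\rm bi}(A)=\bigcup_{k\in[\lambda,A]}\mathcal{D}_{k,\rm bi}$, the maximum-entropy bound $h(\mathbf{Y})\le\ln(2\pi e^3 k^2)$ on each stratum, and monotonicity in $k$. Your final achievability step—realizing $\mathbf{Y}^*$ via $\mathbf{X}^*\sim{\rm Cauchy}_2(\mathbf{0},{\rm diag}((A-\lambda)^2,(A-\lambda)^2))$ using Cauchy stability under convolution (the paper's Lemma~\ref{lma:L2})—is a point the paper's proof leaves implicit, so including it is a welcome refinement rather than a deviation.
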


\begin{proof}
For the additive channel we are considering, the mutual information can be written as
\begin{equation}
I(\mathbf{X};\mathbf{Y})=h(\mathbf{Y})-h(\mathbf{N}).
\end{equation}
The problem of finding the channel capacity turns out to be an optimization problem:
$
C(A,\lambda)=\sup_{\mathbf{Y}\in\mathcal{D}_{\text{bi}}(A)} I(\mathbf{X};\mathbf{Y}).
$
We have
\begin{align}
    C(A,\lambda)
    &=\sup_{\mathbf{Y} \in \mathcal{D}_{\text{bi}}(A)} \Big\{h(\mathbf{Y} ) - h(\mathbf{N} )\Big\}\\
    &=\Big\{\sup_{\mathbf{Y} \in \mathcal{D}_{\text{bi}}(A)} h(\mathbf{Y} )\Big\} - \ln(2\pi e^3\lambda^2), \label{eq:54}
\end{align}
where
\begin{align}
    \sup_{\mathbf{Y} \in \mathcal{D}_{\text{bi}}(A)} h(\mathbf{Y})
    &=
    \sup_{k\in[\lambda,A]}
    \Big\{ \sup_{\mathbf{Y}\in \mathcal{D}_{k,\text{bi}}} h(\mathbf{Y}) \Big\}\label{eq:55}\\
    &=
    \sup_{k\in[\lambda,A]}
    \ln(2\pi e^3 k^2) \label{eq:56}\\
    &= \ln(2\pi e^3 A^2). \label{eq:57}
\end{align}
Combining equations \eqref{eq:54} and \eqref{eq:57} yields
\begin{align}
\begin{split}
C(A,\lambda)
=&\ \ln(2\pi e^3 A^2) - \ln(2\pi e^3 \lambda^2)\\
=&\ \ln\left(\dfrac{A^2}{\lambda^2}\right)
=2\ln\left(\dfrac{A}{\lambda}\right).
\end{split}
\end{align}
Notice that the equalities \eqref{eq:55}-\eqref{eq:57} occur if and only if $\mathbf{Y}$ distributes as ${\rm Cauchy}_2(0,\text{diag}(A^2,A^2))$. Hence, we have proved Theorem~\ref{thm:T2}.
\end{proof}
\color{black}

We encapsulate the primary findings of this section in Table~I. Note that within this context, $A$ represents the output signal dispersion level specific to FAP channels. Moreover, by defining $A^2$ as $\sigma^2+P$, we establish a correspondence between the Gaussian case and the 2D FAP case. To visually illustrate these capacity curves, we present a graphical representation in Fig.~2.


\begin{table}[ht]
\caption{2D and 3D FAP channel capacity curves\\ with a comparison to Gaussian channels}
\label{tab:main}
\centering
\begin{tabular}{|r||c|c|c|}
\hline
\textbf{Channel Type} & \textbf{Gaussian} & \textbf{2D FAP} & \textbf{3D FAP} \\ 
\hline
Constraint & $L^2$ constr. & log constr. & log constr. \\ 
\hline
Capacity Formula & $\frac12\ln\left(\dfrac{\sigma^2+P}{\sigma^2}\right)$ & $\ln\left(\dfrac{A}{\lambda}\right)$ & $2\ln\left(\dfrac{A}{\lambda}\right)$\\ 
\hline
Parameter Range & N/A & $A\geq\lambda$ & $A\geq\lambda$\\
\hline
\end{tabular}
\label{tab:yourtable}
\end{table}

\begin{figure}[!t]  
\label{fig:capacity}
\centerline{\includegraphics[width=0.48\textwidth]{./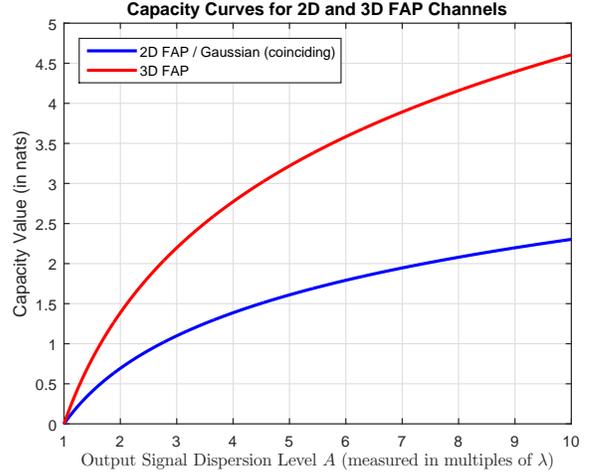}}
\caption{
Illustration depicting the capacity curves in relation to the output signal dispersion level $A$ for both 2D and 3D FAP channels. It is worth noting that the Gaussian curve coincides with the 2D FAP curve when we equate $A^2$ to $\sigma^2+P$ in the Gaussian case.
}
\label{}
\end{figure}

\section{Conclusion}\label{sec:conclude}
\color{black}
This study examines the capacity problem for zero-drift first arrival position (FAP) communication channel in diffusive molecular communication (MC) systems. 
Our findings reveal that when the drift velocity approaches zero, both the 2D and 3D FAP channels reduce to univariate and bivariate Cauchy distributions, respectively. 

Although a recent work \cite{lee2023char} investigate the FAP channel capacity under the case of positive \emph{vertical drift}, at the time of this writing, the determination of the Shannon capacity for the FAP channels with \emph{zero drift} (i.e. the multidimensional Cauchy channels) remains an open problem. This challenge stems from the heavy-tailed nature of the Cauchy distribution, which belongs to a broader family of $\alpha$-stable distributions that often lack a second moment. As a result, traditional energy constraints such as the variance of random variables are inadequate for characterizing the signal space.

To address this challenge, we introduce a relative power measure, the $\alpha$-power, into the field of molecular communication. We explicitly define the signal space required to characterize the zero-drift FAP channel capacity and derive the Shannon capacity for the FAP channel in 2D and 3D spaces in closed-form under a fresh logarithmic constraint. The capacity formulations articulated in Theorem~\ref{thm:T1} and Theorem~\ref{thm:T2} underscore an intriguing phenomenon: the 3D FAP channel's capacity eclipses its 2D counterpart by a factor of two when benchmarked against the identical dispersion parameter \(A\). This revelation amplifies the perspective that as spatial dimension \(n\) expands, FAP channels inherently harness greater informational capacity.
\color{black}

\bibliographystyle{IEEEtran}
\balance
\bibliography{main}

\appendices
\section{Density Function and Entropy of Multivariate Cauchy Distribution}
\label{appendix-A}
The Cauchy distribution, named after Augustin Cauchy, is a continuous probability distribution often used in statistics as a canonical example of distribution since both its expected value and its variance are undefined. It is also known, especially among physicists, as the Lorentz distribution.
In mathematics, it is closely related to the fundamental solution for the Laplace equation in the half-plane, and it is one of the few distributions that is stable (see \cite{pierce1997application,fahs2012capacity}) and has a density function that can be expressed analytically. (Other examples are normal distribution and L\'{e}vy distribution.)

The (univariate) Cauchy distribution has the probability density function (PDF) which can be expressed as:
\begin{equation}
    f(x;x_0,\gamma) =
    \dfrac{1}{\pi \gamma} 
    \dfrac{\gamma^2}
    {(x-x_0)^2+\gamma^2
    }
    =
    \dfrac{1}{\pi \gamma} 
    \dfrac{1}
    {1+(\frac{x-x_0}{\gamma})^2
    },
\label{eq:ch-dist}
\end{equation}
where $x_0\in \mathbb{R}$ is the location parameter, and $\gamma>0$ is the scaling parameter (see \cite{shao1993signal}). 
For the purpose of exploring the channel capacity, since the location parameter $x_0$ is irrelevant to the entropy, we may assume without loss of generality that $x_0=0$, yielding
the so called \textit{symmetry Cauchy} with PDF as:
\begin{equation}
    f(x;\gamma) =
    \dfrac{\gamma}{\pi} 
    \dfrac{1}
    {x^2+\gamma^2
    }
    =
    \dfrac{1}{\pi \gamma} 
    \dfrac{1}
    {1+(\frac{x}{\gamma})^2
    }.
\label{eq:sm-ch-dist}
\end{equation}
The entropy of Cauchy distribution can be evaluated by direct calculation. From Eq.~\eqref{eq:ch-dist}, we have
\begin{equation}
    h(X)=\ln(4\pi \gamma)
\end{equation}
whenever $X\sim \rm{Cauchy}(x_0,\gamma)$, and $x_0$ can be chosen arbitrarily.

As for multivariate Cauchy distribution, our notation system mainly follows \cite{kotz2004multivariate}.
Consider a $p$-dimensional Euclidean random vector $\mathbf{X}=(X_1,\cdots,X_p)^\tsp$ which follows multivariate Cauchy distribution.
We use the notation $\mathbf{X} \sim {\rm Cauchy}_p(\boldsymbol\mu,\Sigma)$ to specify the parameters, where $\boldsymbol\mu\in\R^{p\times 1}$ is the location vector, and $\Sigma\in\R^{p\times p}$ is the scale matrix describing the shape of the distribution.
The PDF of multivariate Cauchy distribution is given by
the following formula:
\begin{align} 
\label{eq:pvariate-ch-dist}
\begin{split}
    &f_{\mathbf{X}}^{(p)}(\mathbf{x};\boldsymbol\mu,\Sigma)\\
    =&\ 
    \dfrac{\Gamma(\frac{1+p}{2})}
    {\Gamma(\frac{1}{2})\pi^{\frac{p}{2}}
    |\Sigma|^{\frac{1}{2}}
    [1+(\mathbf{x}-\boldsymbol\mu)^\top\Sigma^{-1}(\mathbf{x}-\boldsymbol\mu)]^\frac{1+p}{2}
    }.
\end{split}
\end{align}
Note that $\Sigma$ is by nature a positive-definite square matrix.
For our later purpose, we mainly work in the case $p=2$. We provide the PDF of this special case here for convenience:
\begin{equation}
    f_{\mathbf{X}}^{(2)}(\mathbf{x};\boldsymbol\mu,\Sigma)=
    \dfrac{\Gamma(\frac{3}{2})}
    {\pi^{\frac{3}{2}}
    |\Sigma|^{\frac{1}{2}}
    [1+(\mathbf{x}-\boldsymbol\mu)^\top\Sigma^{-1}(\mathbf{x}-\boldsymbol\mu)]^\frac{3}{2}
    }.
\label{eq:bivariate-ch-dist}
\end{equation}
Eq.~\eqref{eq:bivariate-ch-dist} is the so-called bivariate Cauchy distribution.

Similar to the univariate case,
we can without loss of generality set $\boldsymbol\mu=\mathbf{0}$ in equations \eqref{eq:pvariate-ch-dist} and \eqref{eq:bivariate-ch-dist} for the purpose of entropy and channel capacity analysis.
When $\boldsymbol\mu = \mathbf{0}$, the multivariate Cauchy is called \textit{central}. For the case that the Cauchy is central, we abuse the notations ${\rm R}$ and $\Sigma$ to mean the same scale matrix.
We shall assume all Cauchy distributions considered in this paper to be central from now on. In addition, for the special case that $\Sigma$ is a diagonal matrix, 
namely
\begin{equation}
\Sigma=\text{diag}(\gamma^2,\gamma^2)
=\begin{bmatrix}
\gamma^2&0\\
0&\gamma^2
\end{bmatrix},
\end{equation}
the central bivariate Cauchy density function can be simplified into a more concise form:
\begin{equation}
\dfrac{1}{2\pi}
\dfrac{\gamma}{
(n_1^2+n_2^2+\gamma^2)^{3/2}
},
\label{eq:simple-bi-ch}
\end{equation}
where $\mathbf{n}=(n_1,n_2)$ is the Cauchy random vector.

Finally, the (differential) entropy of $p$-variate central Cauchy distribution can be found in [4]. We briefly state the results here for later use. Suppose the scale matrix ${\rm R}$ of a $p$-variate Cauchy $\mathbf{X}$ is prescribed, the entropy of $\mathbf{X}$ can be expressed as
\begin{equation}
    h(\mathbf{X};{\rm R})=\dfrac{1}{2}\ln|{\rm R}|+\Phi(p),
\end{equation}
where $|{\rm R}|$ represents the determinant of matrix ${\rm R}$; $\Phi(p)$ is a constant depending only on dimension $p$, and it can be evaluated through
\begin{align}
\begin{split}
    \Phi(p)
    =&\ \ln
    \left[
    \dfrac{\pi^{\frac{p}{2}}}{\Gamma(\frac{p}{2})}B\left(\frac{p}{2},\dfrac{1}{2}\right)
    \right]\\
    &\quad +\dfrac{1+p}{2}\left[
    \psi\left(\frac{1+p}{2}\right)-\psi\left(\frac12\right)
    \right].
\end{split}
\end{align}
In the above expression, $B(x,y)=\frac{\Gamma(x)\Gamma(y)}{\Gamma(x+y)}$ is the so-called \textit{beta function}, and
\begin{equation}
\psi(t):=\dfrac{d}{dt}\big[\ln\Gamma(t)\big]   
\end{equation}
is known as the \textit{digamma function}.

\section{Proof of Theorem~\ref{thm:T1}}
In this appendix section, we prove Theorem~\ref{thm:T1} step by step.
For the additive channel
\begin{equation}
    Y=X+N
\end{equation}
we are considering, the mutual information can be written as
$I(X;Y)=h(Y)-h(N)$ according to Section~\ref{pre-A}. The issue of finding the capacity of channel \eqref{eq:29} turns out to be an optimization problem:
\begin{align}
    C(A,\lambda)&=\sup_{Y\in\mathcal{D}(A)} I(X;Y).
\end{align}
The calculation is as follows. We have
\begin{align}
    C(A,\lambda)
    &=\sup_{Y \in \mathcal{D}(A)} \Big\{h(Y) - h(N)\Big\}\\
    &=\Big\{\sup_{Y \in \mathcal{D}(A)} h(Y)\Big\} - \ln(4\pi\lambda), \label{eq:39}
\end{align}
where
\begin{align}
    \sup_{Y \in \mathcal{D}(A)} h(Y)
    &=
    \sup_{k\in[\lambda,A]}
    \Big\{ \sup_{Y\in \mathcal{D}_k} h(Y) \Big\}\label{eq:40}\\
    &=
    \sup_{k\in[\lambda,A]}
    \ln(4\pi k) \label{eq:41}\\
    &= \ln(4\pi A). \label{eq:42}
\end{align}
Combining equations \eqref{eq:39} and \eqref{eq:42} yields
\begin{equation}
    C(A,\lambda) = 
    \ln(4\pi A) - \ln(4\pi \lambda) = \ln\left(\dfrac{A}{\lambda}\right).
\end{equation}
Notice that the equalities \eqref{eq:40}-\eqref{eq:42} hold if and only if $Y$ distributes as ${\rm Cauchy}(0,A)$. Hence, Theorem~\ref{thm:T1} is proved.
\label{appendix-B}

\section{Some Useful Facts about Cauchy Distribution}\label{pre-B}
\color{black}
In this appendix, we delve into specific properties of Cauchy distributions, building upon the foundational probability density function formula and entropy evaluations presented in Appendix~\ref{appendix-A}. 

The first two properties initially highlight the behavior of the independent sum of Cauchy distributions. To elucidate, the univariate symmetric Cauchy distribution remains invariant under independent addition, as articulated in the subsequent lemma.

\begin{lma}
Given two independent Cauchy random variables, $U \sim {\rm Cauchy}(0,\sigma)$ and $V \sim {\rm Cauchy}(0,\tau)$, their sum is described as:
\begin{equation}
Z = U + V \sim {\rm Cauchy}(0,\sigma+\tau).    
\end{equation}
\label{lma:L1}
\end{lma}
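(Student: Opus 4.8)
The plan is to prove Lemma~\ref{lma:L1} via characteristic functions, which turns the convolution of the two densities into a simple product. First I would recall (and, to be self-contained, establish) that the characteristic function of a symmetric Cauchy variable $W \sim {\rm Cauchy}(0,\gamma)$ is $\varphi_W(t) = \E[e^{\mathrm{i}tW}] = e^{-\gamma|t|}$. This identity is really the only nontrivial ingredient. I would derive it by evaluating the integral $\int_{-\infty}^{\infty} e^{\mathrm{i}tx}\,\frac{\gamma}{\pi(x^2+\gamma^2)}\,dx$ through contour integration: closing the contour in the upper or lower half-plane according to the sign of $t$, and collecting the residue at the simple pole $x=\mathrm{i}\gamma$ (respectively $x=-\mathrm{i}\gamma$). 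The density here is exactly the symmetric Cauchy PDF of Eq.~\eqref{eq:sm-ch-dist} in Appendix~\ref{appendix-A}.

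Second, using the independence of $U$ and $V$, the characteristic function of the sum factorizes. I would write $\varphi_Z(t) = \varphi_U(t)\,\varphi_V(t) = e^{-\sigma|t|}\,e^{-\tau|t|} = e^{-(\sigma+\tau)|t|}$. Third, I would recognize the right-hand side as precisely the characteristic function of ${\rm Cauchy}(0,\sigma+\tau)$, and then invoke the uniqueness theorem for characteristic functions (a characteristic function determines its distribution uniquely) to conclude that $Z \sim {\rm Cauchy}(0,\sigma+\tau)$, which is the claim. Note that since we work with central (symmetric) Cauchy variables throughout, the location parameter stays at $0$ and only the scale parameters add.

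I expect the contour-integration evaluation of the Cauchy characteristic function to be the main technical obstacle; everything after it is a one-line multiplication together with an appeal to uniqueness. If one preferred to avoid complex analysis, the alternative is to compute the convolution $f_Z(z) = \int_{-\infty}^{\infty} \frac{\sigma}{\pi(u^2+\sigma^2)}\cdot\frac{\tau}{\pi\big((z-u)^2+\tau^2\big)}\,du$ directly by a partial-fraction decomposition in the variable $u$. This route is elementary but considerably more laborious and error-prone, so I would favor the characteristic-function argument and relegate the direct convolution to a remark at most.
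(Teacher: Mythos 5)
Your proof is correct. The paper itself states Lemma~\ref{lma:L1} without proof, treating it as a known stability property of the symmetric Cauchy family, so there is no in-paper argument to compare against; your characteristic-function derivation ($\varphi_W(t)=e^{-\gamma|t|}$ via the residue at $x=\mathrm{i}\gamma$ or $x=-\mathrm{i}\gamma$, factorization under independence, and uniqueness of characteristic functions) is the standard and complete justification, and your remark that only the scale parameters add because both variables are centered is accurate.
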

\noindent When considering the independent summation of bivariate Cauchy distributions, the lemma below encapsulates the relationship.

\begin{lma}
With $\mathbf{U} \sim {\rm Cauchy}_2\Big(\mathbf{0},\Sigma_1={\rm diag}(\sigma^2,\sigma^2)\Big)$ and $\mathbf{V} \sim {\rm Cauchy}_2\Big(\mathbf{0},\Sigma_2={\rm diag}(\tau^2,\tau^2)\Big)$ representing two independent bivariate Cauchy random vectors, their summation follows:
\begin{align}
\begin{split}
\mathbf{Z} &= \mathbf{U} + \mathbf{V} \\
&\sim {\rm Cauchy}_2\Big(\mathbf{0},\Sigma_3={\rm diag}\left((\sigma+\tau)^2,(\sigma+\tau)^2\right)\Big).
\end{split}
\end{align}
\label{lma:L2}
\end{lma}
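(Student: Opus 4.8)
The plan is to prove Lemma~\ref{lma:L2} via characteristic functions, since the heavy-tailed $3/2$-power in the bivariate density (Eq.~\eqref{eq:simple-bi-ch}) makes a direct convolution integral unpleasant and — crucially — the two coordinates of an isotropic bivariate Cauchy vector are \emph{not} independent, so one cannot simply apply the univariate Lemma~\ref{lma:L1} coordinatewise. (Indeed, the joint density $\propto(n_1^2+n_2^2+\gamma^2)^{-3/2}$ does not factor into a product of the two Cauchy marginals.)

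The key ingredient I would establish first is the characteristic function of a central isotropic bivariate Cauchy vector: for $\mathbf{X}\sim{\rm Cauchy}_2(\mathbf{0},{\rm diag}(\gamma^2,\gamma^2))$ I would show that
\begin{equation}
\phi_{\mathbf{X}}(\mathbf{t}) = \E\big[e^{i\mathbf{t}^\tsp\mathbf{X}}\big] = e^{-\gamma\norm{\mathbf{t}}}.
\end{equation}
The cleanest justification leans on the remark already made in Appendix~\ref{appendix-A}: the density in Eq.~\eqref{eq:simple-bi-ch} is precisely the Poisson kernel for the Laplace equation on the half-space $\R^3_+$ with ``height'' parameter $\gamma$. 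Since that Poisson kernel $P_\gamma$ satisfies $\widehat{P_\gamma}(\mathbf{t})=e^{-\gamma\norm{\mathbf{t}}}$, the claim follows immediately; alternatively, one may invoke the isotropic $\alpha$-stable characteristic function $\exp(-(\mathbf{t}^\tsp\Sigma\,\mathbf{t})^{\alpha/2})$ specialized to $\alpha=1$ and $\Sigma={\rm diag}(\gamma^2,\gamma^2)$.

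With the characteristic function in hand, the remainder is routine. Because $\mathbf{U}$ and $\mathbf{V}$ are independent, the characteristic function of $\mathbf{Z}=\mathbf{U}+\mathbf{V}$ factorizes:
\begin{equation}
\phi_{\mathbf{Z}}(\mathbf{t}) = \phi_{\mathbf{U}}(\mathbf{t})\,\phi_{\mathbf{V}}(\mathbf{t}) = e^{-\sigma\norm{\mathbf{t}}}\,e^{-\tau\norm{\mathbf{t}}} = e^{-(\sigma+\tau)\norm{\mathbf{t}}}.
\end{equation}
This is exactly the characteristic function of ${\rm Cauchy}_2(\mathbf{0},{\rm diag}((\sigma+\tau)^2,(\sigma+\tau)^2))$, so by the uniqueness theorem for characteristic functions the distribution of $\mathbf{Z}$ is identified, proving the lemma.

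The main obstacle is the first step: pinning down $\phi_{\mathbf{X}}(\mathbf{t})=e^{-\gamma\norm{\mathbf{t}}}$. If one insists on a self-contained Fourier computation rather than the Poisson-kernel identity or a stable-law citation, one must evaluate $\int_{\R^2}e^{i\mathbf{t}^\tsp\mathbf{x}}(\norm{\mathbf{x}}^2+\gamma^2)^{-3/2}\,d\mathbf{x}$, which by rotational symmetry reduces to a one-dimensional radial (Hankel) integral against the Bessel function $J_0$. That single evaluation is the only genuinely technical point; everything after it is a two-line factorization argument.
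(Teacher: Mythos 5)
The paper states Lemma~\ref{lma:L2} in its appendix as a background fact and supplies no proof of its own, so there is no in-paper argument to compare yours against; judged on its own merits, your proof is correct. The characteristic-function route is the standard and arguably the cleanest one here: the identity $\phi_{\mathbf{X}}(\mathbf{t})=e^{-\gamma\norm{\mathbf{t}}}$ for $\mathbf{X}\sim{\rm Cauchy}_2(\mathbf{0},\gamma^2 I)$ is exactly the Fourier transform of the half-space Poisson kernel (equivalently, the isotropic $1$-stable characteristic function), and once it is in hand the factorization $e^{-\sigma\norm{\mathbf{t}}}e^{-\tau\norm{\mathbf{t}}}=e^{-(\sigma+\tau)\norm{\mathbf{t}}}$ plus uniqueness finishes the job. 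Your preliminary remark is also substantive and worth keeping: the coordinates of an isotropic bivariate Cauchy vector are \emph{not} independent (the density $\propto(n_1^2+n_2^2+\gamma^2)^{-3/2}$ does not factor), so a naive coordinatewise application of Lemma~\ref{lma:L1} proves only that the marginals of $\mathbf{Z}$ are right, not the joint law. An alternative, essentially equivalent route the paper's toolkit suggests is to combine Lemma~\ref{lma:L1} with Lemma~\ref{lma:L3}: every projection $\mathbf{v}^\tsp\mathbf{Z}=\mathbf{v}^\tsp\mathbf{U}+\mathbf{v}^\tsp\mathbf{V}$ is univariate Cauchy with scale $(\sigma+\tau)\norm{\mathbf{v}}$, and the Cram\'er--Wold device then identifies the joint distribution --- but that device is itself the characteristic-function uniqueness theorem in disguise, so nothing is gained. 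One caution if you go that way: as printed, Lemma~\ref{lma:L3} returns the scale parameter $\mathbf{v}^\tsp\Sigma\mathbf{v}$ rather than $\sqrt{\mathbf{v}^\tsp\Sigma\mathbf{v}}$, a squared-versus-unsquared notational mismatch in the paper that you would need to resolve before chaining it with Lemma~\ref{lma:L1}; your self-contained CF argument sidesteps this entirely.
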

\noindent Note that in Lemma~2, we employ the notation
\begin{equation}
{\rm diag}(a,b) =
\begin{bmatrix}
a & 0 \\
0 & b
\end{bmatrix}
\end{equation}
to succinctly represent diagonal matrices.

The final property examines the linear combination of components of a Cauchy vector, \(\mathbf{X}\). Given an arbitrary constant vector \(\mathbf{v} \in \R^{p \times 1}\), the lemma below delineates the relationship.

\begin{lma}
For a vector $\mathbf{X} \sim {\rm Cauchy}_p(\boldsymbol\mu,\Sigma)$, the relation is given by:
\begin{equation}
\mathbf{v}^\intercal \mathbf{X}
\sim {\rm Cauchy}(\mathbf{v}^\intercal \boldsymbol\mu,
\mathbf{v}^\intercal \Sigma \mathbf{v}),
\end{equation}
where \(\boldsymbol{\mu}\) signifies the location vector of \(\mathbf{X}\) and \(\Sigma\) represents its scale matrix.
\label{lma:L3}
\end{lma}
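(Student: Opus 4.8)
The plan is to determine the law of the scalar projection \(Z:=\mathbf{v}^{\top}\mathbf{X}\) through its characteristic function, exploiting the fact that a multivariate Cauchy is a symmetric \(\alpha\)-stable vector with stability index \(\alpha=1\), whose characteristic function factors cleanly under linear maps. The starting point is that the \(p\)-variate Cauchy \(\mathbf{X}\sim{\rm Cauchy}_p(\boldsymbol\mu,\Sigma)\) has characteristic function
\[
\phi_{\mathbf{X}}(\mathbf{t})=\exp\!\left(i\,\mathbf{t}^{\top}\boldsymbol\mu-\sqrt{\mathbf{t}^{\top}\Sigma\,\mathbf{t}}\,\right),
\]
which reduces correctly to \(\exp(i t x_0-\gamma\lvert t\rvert)\) in the univariate case (\(p=1\), \(\Sigma=\gamma^2\)), matching the convention of Appendix~\ref{appendix-A}.

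The key steps, in order, are: (i) write the one-dimensional characteristic function of \(Z\) via the substitution rule \(\phi_Z(s)=\E[e^{is\mathbf{v}^{\top}\mathbf{X}}]=\phi_{\mathbf{X}}(s\mathbf{v})\); (ii) simplify the exponent using \((s\mathbf{v})^{\top}\Sigma(s\mathbf{v})=s^2\,\mathbf{v}^{\top}\Sigma\mathbf{v}\), so that \(\sqrt{(s\mathbf{v})^{\top}\Sigma(s\mathbf{v})}=\lvert s\rvert\sqrt{\mathbf{v}^{\top}\Sigma\mathbf{v}}\); (iii) read off \(\phi_Z(s)=\exp\!\big(i s\,\mathbf{v}^{\top}\boldsymbol\mu-\lvert s\rvert\sqrt{\mathbf{v}^{\top}\Sigma\mathbf{v}}\big)\) and recognise it as the characteristic function of a univariate Cauchy with location \(\mathbf{v}^{\top}\boldsymbol\mu\) and scale matrix \(\mathbf{v}^{\top}\Sigma\mathbf{v}\); (iv) invoke the uniqueness theorem for characteristic functions to conclude. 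As an internal check, the same computation recovers Lemma~\ref{lma:L1} (take \(p=1\) with two summands) and Lemma~\ref{lma:L2} (add the exponents \(\sqrt{\mathbf{t}^{\top}\Sigma_1\mathbf{t}}+\sqrt{\mathbf{t}^{\top}\Sigma_2\mathbf{t}}\) for diagonal matrices), and reproduces the marginals of the diagonal bivariate Cauchy of Appendix~\ref{appendix-A} by taking \(\mathbf{v}=\mathbf{e}_1\).

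The main obstacle is that the characteristic function used in Step~(i) is not established in the paper, so it must either be cited from the stable-distribution literature (e.g.~\cite{pierce1997application,fahs2012capacity}) or proved in a self-contained way. The cleanest self-contained route uses the Gaussian scale-mixture representation of the \(\nu=1\) multivariate Student-\(t\): write \(\mathbf{X}=\boldsymbol\mu+\Sigma^{1/2}\mathbf{G}/\sqrt{W}\) with \(\mathbf{G}\sim\mathcal{N}(\mathbf{0},I_p)\) and \(W\sim\chi^2_1\) independent, whence \(Z=\mathbf{v}^{\top}\boldsymbol\mu+\sqrt{\mathbf{v}^{\top}\Sigma\mathbf{v}}\,(G'/\sqrt{W})\) with \(G'\sim\mathcal{N}(0,1)\); since \(G'/\sqrt{W}\) is a standard Cauchy, a single conditioning argument yields the claim. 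The one delicate bookkeeping point is the scale: because the quadratic form sits under a square root in the stable exponent, \(\mathbf{v}^{\top}\Sigma\mathbf{v}\) plays the role of the \(1\times1\) scale matrix of \(Z\) (equivalently, the first-power scale parameter is \(\sqrt{\mathbf{v}^{\top}\Sigma\mathbf{v}}\)), consistent with the \({\rm Cauchy}_p(\boldsymbol\mu,\Sigma)\) convention.
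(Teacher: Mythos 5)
Your argument is correct, but note that the paper itself offers no proof of Lemma~\ref{lma:L3}: it is stated in the appendix as a known fact about the multivariate Cauchy family, with the notation (and implicitly the result) taken from \cite{kotz2004multivariate}. Your characteristic-function route is the standard way to establish it, and the steps are sound: the $p$-variate Cauchy of Eq.~\eqref{eq:pvariate-ch-dist} is the elliptical $1$-stable law with characteristic function $\exp\bigl(i\,\mathbf{t}^{\top}\boldsymbol\mu-\sqrt{\mathbf{t}^{\top}\Sigma\mathbf{t}}\bigr)$, the substitution $\mathbf{t}=s\mathbf{v}$ is immediate, and uniqueness of characteristic functions finishes the job. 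You are right that the one genuine subtlety is the scale convention: the second argument of the univariate ${\rm Cauchy}(\cdot,\cdot)$ in Eq.~\eqref{eq:ch-dist} is the linear scale $\gamma$, whereas $\mathbf{v}^{\top}\Sigma\mathbf{v}$ is quadratic, so the statement is only consistent if one reads $\mathbf{v}^{\top}\Sigma\mathbf{v}$ as a $1\times1$ scale matrix (equivalently, the $\gamma$-parameter of $\mathbf{v}^{\top}\mathbf{X}$ is $\sqrt{\mathbf{v}^{\top}\Sigma\mathbf{v}}$); your check against the marginal of Eq.~\eqref{eq:simple-bi-ch}, which integrates to scale $\gamma=\sqrt{\mathbf{e}_1^{\top}\Sigma\mathbf{e}_1}$, settles this the right way. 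You also correctly flag that the characteristic function itself is nowhere derived in the paper, and your fallback via the Gaussian scale mixture $\mathbf{X}=\boldsymbol\mu+\Sigma^{1/2}\mathbf{G}/\sqrt{W}$ with $W\sim\chi^2_1$ is a clean self-contained substitute, since $\mathbf{v}^{\top}\Sigma^{1/2}\mathbf{G}\sim\mathcal{N}(0,\mathbf{v}^{\top}\Sigma\mathbf{v})$ and the ratio of a standard normal to an independent $\sqrt{\chi^2_1}$ is standard Cauchy. One small caution on your side remarks: the additive closure you invoke to recover Lemma~\ref{lma:L2} holds because the two scale matrices there are proportional (both isotropic); for general non-proportional $\Sigma_1,\Sigma_2$ the sum of the exponents $\sqrt{\mathbf{t}^{\top}\Sigma_1\mathbf{t}}+\sqrt{\mathbf{t}^{\top}\Sigma_2\mathbf{t}}$ is not of the required form, so that consistency check should not be read as a general closure claim.
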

\color{black}

\end{document}